\documentclass[aps,pra,reprint,superscriptaddress,amsmath,amssymb,nofootinbib]{revtex4-2}

\usepackage{graphicx}
\usepackage{dcolumn}
\usepackage{bm}
\usepackage[dvipsnames]{xcolor}
\usepackage{pgfplots}
\usepackage{braket}
\usepackage{verbatim}

\definecolor{darkblue}{rgb}{0,0.3,0.7}
\usepackage[pdftex,colorlinks=true,bookmarks=false,citecolor=blue,urlcolor=blue]{hyperref}
\hypersetup{
colorlinks=true,
linkcolor=darkblue,
filecolor=blue,
citecolor=darkblue,  
urlcolor=darkblue,
}

\DeclareMathOperator{\Tr}{Tr}

\newcommand\imag{\mathrm{i}}

\usepackage[capitalize]{cleveref}
\crefname{section}{Sec.}{Secs.}
\Crefname{section}{Section}{Sections}

\crefrangelabelformat{section}{#3#1#4--#5\crefstripprefix{#1}{#2}#6}
\crefrangelabelformat{figure}{#3#1#4--#5\crefstripprefix{#1}{#2}#6}
\crefrangelabelformat{equation}{(#3#1#4--#5\crefstripprefix{#1}{#2}#6)}

\crefmultiformat{equation}%
{\edef\crefstripprefixinfo{#1}Eqs.~(#2#1#3}%
{,#2\crefstripprefix{\crefstripprefixinfo}{#1}#3)}%
{,#2\crefstripprefix{\crefstripprefixinfo}{#1}#3}%
{,#2\crefstripprefix{\crefstripprefixinfo}{#1}#3)}
\pgfplotsset{compat=1.18}

\usepackage{amsthm}
\newtheorem{theorem}{Theorem}
\usepackage{tikz}
\usetikzlibrary{shapes, arrows.meta, positioning, calc, decorations.pathmorphing}

\begin{document}

\title{Induced random mixed states are symmetric Dirichlet mixtures}

\author{Brian T. Kirby}
\affiliation{DEVCOM Army Research Laboratory, Adelphi, MD 20783 USA}
\affiliation{Tulane University, New Orleans, LA 70118 USA}
\email{Brian.T.Kirby4.civ@army.mil}

\author{Alexander C. B. Greenwood}
\affiliation{
Dept of Electrical \& Computer Engineering, University of Toronto, Toronto, Ontario, Canada M5S 3G4
}

\author{Sanjaya Lohani}
\affiliation{Department of Electrical and Computer Engineering, Southern Methodist University, Dallas, Texas 75205, USA}

\author{Joseph M. Lukens}
\affiliation{Elmore Family School of Electrical and Computer Engineering and Purdue Quantum Science and Engineering Institute, Purdue University, West Lafayette, Indiana 47907, USA}
\affiliation{Quantum Information Science Section, Oak Ridge National Laboratory, Oak Ridge, Tennessee 37831, USA}

\date{\today}

\begin{abstract}
We establish an exact equivalence in distribution between $D$-dimensional random mixed states induced by partial traces over $K$-dimensional environments and the Mai-Alquier distribution, a mixture of $K$ independent Haar-random pure states weighted by a symmetric Dirichlet distribution. This identification recasts a class of expectation values for induced ensembles into calculations involving Dirichlet moments and low-order Haar averages on a single-system state space. As applications, we recover exact purity moments up to fourth order, derive the mean Hilbert--Schmidt distance between independent induced ensembles with possibly different environment dimensions, and obtain exact average determinants. These results provide a unified and constructive perspective on induced random mixed states and on the evaluation of quantities such as purity moments, overlaps, and determinants.
\end{abstract}

\maketitle

\section{Introduction}

The characterization of random mixed quantum states is a foundational problem with broad applications across open quantum systems, quantum statistical mechanics, and quantum information processing
\cite{bengtsson2007geometry, breuer2002theory, popescu2006entanglement, collins2016random}. 
Traditionally, this topic has been approached through the physical paradigm of environmental coupling: when a $D$-dimensional system and a $K$-dimensional environment are prepared in a uniformly Haar-random global pure state, tracing out the environment yields an induced measure on the system of interest \cite{zyczkowski2001induced}. Analyzing these reduced density matrices has historically relied on random matrix theory (RMT), employing continuous integration over unitary manifolds via Weingarten calculus \cite{collins2006integration, collins2010random} or joint eigenvalue distributions \cite{mehta2004random, sommers2004statistical}. While useful, these mathematical techniques are often computationally intensive.

Parallel to these physically motivated constructions, the quantum state tomography and Bayesian inference communities have explored a complementary generative model for random mixed states: the Mai-Alquier (MA) ensemble \cite{mai2017pseudo, Lukens2020b, lohani2021improving, Lohani2022a, Mai2023, lohani2023demonstration}. In this construction, a mixed state is generated as a convex combination of $K$ independent Haar-random pure states with weights drawn from a symmetric Dirichlet distribution. This ensemble has proven useful as a tunable prior for simulation and inference, particularly because its purity can be adjusted through the Dirichlet concentration parameter. Although the relation between Dirichlet weight normalization and Wishart-type constructions is suggestive, the induced-state and MA constructions are typically presented in different languages and used for different purposes. 

These constructions admit natural interpretations in terms of the distinction between ``proper" and ``improper" mixtures discussed by d'Espagnat \cite{d2018conceptual}. For each sampled MA decomposition, preparing its component pure states with the specified probabilities provides a proper-mixture realization. By contrast, tracing out part of an entangled global pure state yields an improper mixture, even when the global state is completely known. This distinction concerns the preparation and description of a state; identical density operators nevertheless yield identical statistics for measurements restricted to the system.

In this work, we make this connection explicit by proving that the induced ensemble with environment dimension $K$ is exactly equal in distribution to the symmetric MA ensemble with $K$ pure-state components and Dirichlet concentration parameter $\alpha=D$. Our contribution is therefore not to replace the standard induced-state formalism, but to provide an equivalent single-system representation in which certain observables can be evaluated through Dirichlet moments and elementary Haar averages rather than through direct integration over composite manifolds or joint eigenvalue densities. We illustrate the usefulness of this perspective by recovering known purity moments, deriving the mean Hilbert-Schmidt distance between independent ensembles with $K_1\neq K_2$, and obtaining an exact expression for the mean determinant.

The remainder of this work is organized as follows: in Sec.~\ref{sec:equivalence}, we formally prove the distributional equivalence between induced measures and MA ensembles; in Sec.~\ref{sec:analytical}, we derive exact second and third purity moments (extending to the fourth moment in Appendix~\ref{app:exact_4th_moment}), evaluate exact Hilbert-Schmidt distances between distinct induced ensembles, and compute average matrix determinants, validating our results against known literature benchmarks; and in \cref{sec:conclusion}, we summarize our findings and propose directions for future work.
Finally, Appendix~\ref{app:generalized_dirichlet} extends these mathematical identities to generalized asymmetric Dirichlet priors.

\section{Equivalence of induced measures and Dirichlet ensembles}
\label{sec:equivalence}

\begin{figure*}[t]
\centering
\begin{tikzpicture}[
    pure state/.style={circle, fill=blue!70, draw=black, thick, inner sep=2pt},
    mixed state/.style={circle, fill=red!80, draw=black, thick, inner sep=2.5pt},
    unnorm vector/.style={-{Stealth[length=2.2mm]}, thick, orange!80!black},
    unnorm point/.style={circle, fill=orange!50, draw=orange!90!black, thick, inner sep=1.8pt},
    bloch outline/.style={draw=black!70, thick},
    bloch faint/.style={draw=black!35, dashed, thick},
    equator front/.style={draw=black!70, thick},
    equator faint/.style={draw=black!25, dashed},
    axis line/.style={draw=black!30, ->},
    hull line/.style={draw=blue!50, thick, dashed},
    font=\sffamily\small
]
\def\R{1.5} 
\def\Aone{55}
\def\Atwo{205}
\def\Athree{-35}
\def\Yone{1.45} 
\def\Ytwo{0.55}
\def\Ythree{0.80}
\pgfmathsetmacro{\sumY}{\Yone + \Ytwo + \Ythree}
\pgfmathsetmacro{\Pone}{\Yone / \sumY}
\pgfmathsetmacro{\Ptwo}{\Ytwo / \sumY}
\pgfmathsetmacro{\Pthree}{\Ythree / \sumY}
\pgfmathsetmacro{\Rhox}{\R * (\Pone*cos(\Aone) + \Ptwo*cos(\Atwo) + \Pthree*cos(\Athree))}
\pgfmathsetmacro{\Rhoy}{\R * (\Pone*sin(\Aone) + \Ptwo*sin(\Atwo) + \Pthree*sin(\Athree))}

\begin{scope}[xshift=0cm]
    
    \draw[bloch outline] (0,0) circle (\R);
    \draw[equator faint] (\R,0) arc (0:180:\R cm and 0.45cm);
    \draw[equator front] (\R,0) arc (0:-180:\R cm and 0.45cm);
    
    \draw[axis line] (0,-2.1) node[below] {$|1\rangle$} -- (0,2.1) node[above] {$|0\rangle$};
    
    \coordinate (S1) at (\Aone:\R);
    \coordinate (S2) at (\Atwo:\R);
    \coordinate (S3) at (\Athree:\R);
    \coordinate (Rho) at (\Rhox, \Rhoy);
    
    \draw[hull line] (S1) -- (S2) -- (S3) -- cycle;
    
    \node[pure state, label={[blue!80!black]above right:$|\psi_1\rangle$}] at (S1) {};
    \node[pure state, label={[blue!80!black]left:$|\psi_2\rangle$}] at (S2) {};
    \node[pure state, label={[blue!80!black]right:$|\psi_3\rangle$}] at (S3) {};
    \node[mixed state, label={[red!80!black]below:$\rho_{\text{MA}}^{(K)}$}] at (Rho) {};
    
    \draw[->, red!50, shorten >=2pt, shorten <=2pt] (S1) -- (Rho);
    \draw[->, red!50, shorten >=2pt, shorten <=2pt] (S2) -- (Rho);
    \draw[->, red!50, shorten >=2pt, shorten <=2pt] (S3) -- (Rho);
    
    \begin{scope}[xshift=-3.3cm, yshift=0cm]
        \node[align=center, font=\bfseries] at (0, 1.6) {Simplex Weights};
        \draw[thick, fill=green!10, draw=green!60!black] (0,1.1) -- (-1.1,-0.4) -- (1.1,-0.4) -- cycle;
        \node[circle, fill=green!60!black, inner sep=2pt, label={[font=\bfseries, text=green!40!black]right:$\mathbf{p}$}] at (0.25, 0.2) {};
        
        \node[align=center, anchor=north] at (0, -0.6) {$\mathbf{p} \sim \text{Dir}_K(D)$ \\[0.3em] $p_1 \approx 0.52$ \\ $p_2 \approx 0.20$ \\ $p_3 \approx 0.28$};
    \end{scope}
\end{scope}

\begin{scope}[xshift=6.8cm]
    
    \draw[bloch faint, fill=gray!4] (0,0) circle (\R);
    \draw[equator faint] (\R,0) arc (0:180:\R cm and 0.45cm);
    \draw[equator faint] (\R,0) arc (0:-180:\R cm and 0.45cm);
    
    \draw[axis line] (0,-2.1) node[below] {$|1\rangle$} -- (0,2.1) node[above] {$|0\rangle$};
    
    \coordinate (U1) at (\Aone:\Yone*\R);
    \coordinate (U2) at (\Atwo:\Ytwo*\R);
    \coordinate (U3) at (\Athree:\Ythree*\R);
    \coordinate (Rho) at (\Rhox, \Rhoy); 
    
    \draw[unnorm vector] (0,0) -- (U1);
    \draw[unnorm vector] (0,0) -- (U2);
    \draw[unnorm vector] (0,0) -- (U3);
    
    \node[unnorm point, label={[orange!90!black]above right:$|\tilde{g}_1\rangle$}] at (U1) {}; 
    \node[unnorm point, label={[orange!90!black]left:$|\tilde{g}_2\rangle$}] at (U2) {}; 
    \node[unnorm point, label={[orange!90!black]below right:$|\tilde{g}_3\rangle$}] at (U3) {}; 
    
    \coordinate (S1) at (\Aone:\R);
    \coordinate (S2) at (\Atwo:\R);
    \coordinate (S3) at (\Athree:\R);
    
    \draw[->, dashed, orange!80!black] (U1) -- (S1);
    \draw[->, dashed, orange!80!black] (U2) -- (S2);
    \draw[->, dashed, orange!80!black] (U3) -- (S3);
    
    \node[pure state, inner sep=1.2pt] at (S1) {};
    \node[pure state, inner sep=1.2pt] at (S2) {};
    \node[pure state, inner sep=1.2pt] at (S3) {};
    
    \node[mixed state, label={[red!80!black]below:$\rho_{\text{induced}}^{(K)}$}] at (Rho) {};
    
    \draw[->, red!50, shorten >=2pt, shorten <=2pt] (S1) -- (Rho);
    \draw[->, red!50, shorten >=2pt, shorten <=2pt] (S2) -- (Rho);
    \draw[->, red!50, shorten >=2pt, shorten <=2pt] (S3) -- (Rho);
    
    \begin{scope}[xshift=3.3cm, yshift=0cm]
        \node[align=center, font=\bfseries] at (0, 1.6) {Ginibre Matrix};
        \node[align=center, fill=orange!10, draw=orange!80, rounded corners, inner sep=5pt] 
        at (0, 0.4) {Columns of $G$:\\[0.2em] $|\tilde{g}_j\rangle = \sum_i G_{ij} |i\rangle$};
        
        \node[align=center, fill=gray!10, draw=gray!50, rounded corners, inner sep=5pt] 
        at (0, -1.2) {Trace Normalization:\\[0.2em] $\rho = \frac{GG^\dagger}{\Tr GG^\dagger }$};
    \end{scope}
\end{scope}

\draw[thick, <->, >=Stealth, decorate, decoration={amplitude=1pt, segment length=10pt}] 
    (2.4, 0) -- node[above, font=\bfseries\large] {$\overset{d}{=}$} (4.4, 0);
\end{tikzpicture}
\caption{Geometric illustration of the equivalence between the Mai-Alquier (MA) construction (left) and the induced-state construction (right) for a qubit system ($D=2$) and environment dimension $K=3$. \textit{Left:} A mixed state is formed by convexly combining $K$ independent Haar-random pure states $|\psi_j\rangle$ with weights $\mathbf p\sim \mathrm{Dir}_K(D)$. \textit{Right:} The columns $|\tilde g_j\rangle$ of a $2\times 3$ Ginibre matrix (plotted as  orange circles) determine normalized directions $|\psi_j\rangle=|\tilde g_j\rangle/\|\ket{\tilde g_j}\|$ (plotted as blue circles) and radial weights through trace normalization of $GG^\dagger$. Under this decomposition, the normalized radial variables generate the same Dirichlet law for the mixture weights, yielding the same random density-matrix distribution.}
\label{fig:equivalence}
\end{figure*}

To connect the physical process of environmental coupling with statistical mixtures, we first define the two underlying generative frameworks. Historically, quantum information theory and RMT have generated random density matrices via the physical principle of purification \cite{zyczkowski2001induced}. Let $G$ be a $D \times K$ complex Ginibre matrix \cite{ginibre1965statistical} with independent standard complex Gaussian entries, $G_{ij} \sim \mathcal{CN}(0,1)\equiv\mathcal{N}\left(0,\frac{1}{2}\right)+\imag\mathcal{N}\left(0,\frac{1}{2}\right)$. 
The column-major vectorization of $G$ defines an unnormalized bipartite vector $|\tilde{\Psi}_{AB}\rangle = \mathrm{vec}(G)$ in the composite Hilbert space $\mathcal{H}_A \otimes \mathcal{H}_B$, where $\dim\mathcal{H}_A=D$ and $\dim\mathcal{H}_B=K$. Its normalized counterpart $|\tilde{\Psi}_{AB}\rangle/\|\ket{\tilde{\Psi}_{AB}}\|$ is a uniformly Haar-random pure state on $\mathcal{H}_A \otimes \mathcal{H}_B$.
Operationally, forming the rank-1 operator $|\tilde{\Psi}_{AB}\rangle\langle\tilde{\Psi}_{AB}|$ and tracing out the environment yields the complex Wishart matrix $W = G G^\dagger$. The resulting valid quantum state on the primary system $A$ is then obtained via trace-normalization:
\begin{equation}\label{eq:induced_measure_ref}
\rho_{\text{induced}} = \frac{G G^\dagger}{\Tr G G^\dagger}.
\end{equation}

Alternatively, the MA distribution provides a constructive, operationally intuitive approach based on convex statistical mixing \cite{mai2017pseudo}. It models a $D$-dimensional mixed state explicitly as an ensemble of $K$ independent, single-system Haar-random pure states $|\psi_j\rangle \in \mathcal{H}_A$:
\begin{equation}\label{eq:ma_dist}
\rho_{\text{MA}} = \sum_{j=1}^{K} p_j |\psi_j\rangle\langle\psi_j|,
\end{equation}
where the mixing weights $\mathbf{p} = (p_1, \dots, p_K)$ are drawn from a $K$-dimensional symmetric Dirichlet distribution with concentration parameter $\alpha$, denoted $\mathbf{p} \sim \text{Dir}_{K}(\alpha)$. Here, the parameter $\alpha > 0$ dictates the concentration of the probability vector across the simplex, with larger $\alpha$ favoring more uniform weights ($p_j \approx 1/K$)~(cf. Fig.~1 of Ref.~\cite{lohani2021improving}).

Our central claim is that these two seemingly disparate generative processes---physical partial traces and Dirichlet weighted non-orthogonal statistical mixtures---are identical in distribution for any environmental dimension $K$, provided the Dirichlet concentration parameter matches the system dimension ($\alpha = D$). We formalize this equivalence in the following theorem.

\begin{theorem}[Equivalence of Induced Measures and MA Ensembles]
\label{thm:equivalence}
Let $\rho_{\textup{induced}}$ be a $D$-dimensional random density matrix induced by taking the partial trace over a $K$-dimensional environment of a uniformly random global pure state. Let $\rho_{\textup{MA}}$ be an MA ensemble mixing $K$ independent, single-system Haar-random pure states with classical weights drawn from a symmetric Dirichlet distribution $\mathbf{p} \sim \textup{Dir}_K(D)$. These two ensembles are identical in distribution:
\begin{equation}\label{eq:law_equivalence}
\rho_{\textup{induced}} \overset{\text{d}}{=} \rho_{\textup{MA}}.
\end{equation}
\end{theorem}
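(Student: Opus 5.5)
We work entirely in the Ginibre representation \eqref{eq:induced_measure_ref}, decomposing $G$ column by column. Write the columns as $|\tilde g_j\rangle=\sum_i G_{ij}|i\rangle$, $j=1,\dots,K$. Then $GG^\dagger=\sum_{j=1}^K |\tilde g_j\rangle\langle\tilde g_j|$ and $\Tr GG^\dagger=\sum_j r_j$, where $r_j=\|\tilde g_j\|^2$. Setting $|\psi_j\rangle=|\tilde g_j\rangle/\|\tilde g_j\|$ (well defined almost surely), we obtain the exact pathwise identity
\begin{equation*}
\rho_{\text{induced}}=\sum_{j=1}^K \frac{r_j}{\sum_{k} r_k}\,|\psi_j\rangle\langle\psi_j| .
\end{equation*}
This already has the MA form \eqref{eq:ma_dist} with $p_j=r_j/\sum_k r_k$. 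It therefore suffices to show two things: the pure states $|\psi_1\rangle,\dots,|\psi_K\rangle$ are i.i.d.\ Haar on $\mathcal H_A$, and $\mathbf p\sim\mathrm{Dir}_K(D)$ independently of them. The equality in law \eqref{eq:law_equivalence} then follows, because $\rho$ is the same measurable function of $(\mathbf p,\psi_1,\dots,\psi_K)$ in both constructions.

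\textbf{Step 1 (independence across columns).} The columns $\tilde g_j$ are i.i.d.\ standard complex Gaussian vectors in $\mathbb C^D$, since the entries of $G$ are i.i.d.\ $\mathcal{CN}(0,1)$. It is therefore enough to analyze a single column and then take products.

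\textbf{Step 2 (polar decomposition of one column).} The density of $\tilde g\in\mathbb C^D$ is proportional to $e^{-\|\tilde g\|^2}$, which is invariant under $U(D)$. In polar coordinates $\tilde g=\sqrt r\,\psi$, with $\psi$ on the unit sphere $S^{2D-1}$, Lebesgue measure factorizes as $r^{D-1}dr\,d\sigma(\psi)$ up to constants. Hence $r$ and $\psi$ are independent. The direction $\psi$ is uniform on $S^{2D-1}$, and discarding the global phase gives the Haar (Fubini--Study) measure on pure states. The radius satisfies $r\sim\mathrm{Gamma}(D,1)$, with density $\propto r^{D-1}e^{-r}$; equivalently, $r$ is a sum of $2D$ squared real Gaussians of variance $1/2$, i.e.\ $\tfrac12\chi^2_{2D}$.

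\textbf{Step 3 (Dirichlet normalization).} By Steps 1--2, the family $(r_1,\dots,r_K,\psi_1,\dots,\psi_K)$ is fully independent, with $r_j$ i.i.d.\ $\mathrm{Gamma}(D,1)$. We then invoke the standard Gamma--Dirichlet representation: if $r_j$ are i.i.d.\ $\mathrm{Gamma}(\alpha,1)$, then $\mathbf p=\mathbf r/\sum_k r_k\sim\mathrm{Dir}_K(\alpha)$, and $\mathbf p$ is independent of $\sum_k r_k$. For completeness this can be checked by the change of variables $(r_1,\dots,r_K)\mapsto(p_1,\dots,p_{K-1},s)$, where $s=\sum_k r_k$. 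The Jacobian is $s^{K-1}$, so the joint density factorizes into $\prod_j p_j^{D-1}$ times $s^{KD-1}e^{-s}$. Since $\mathbf p$ is a function of $\mathbf r$ alone, it is independent of the $\psi_j$. This yields exactly the MA ingredients with $\alpha=D$.

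\textbf{Main obstacle.} The only genuinely delicate point is the preliminary identification that the induced measure (a normalized Haar-random global pure state, partially traced) coincides with $GG^\dagger/\Tr GG^\dagger$, as stated in \eqref{eq:induced_measure_ref}. This holds because the normalization constant $\|\mathrm{vec}(G)\|^2=\Tr GG^\dagger$ cancels in the ratio, so normalizing before or after the partial trace gives the same state. The remaining care is in justifying that the polar factorization in Step 2 produces the Haar measure on projective space, which follows from $U(D)$-invariance together with the uniqueness of the invariant measure. Null sets where some $\tilde g_j=0$ can be ignored.
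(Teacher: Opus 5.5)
Your proposal is correct and follows the paper's proof essentially step for step. You split the Ginibre matrix into columns, polar-factorize each column into an independent Haar direction and a $\mathrm{Gamma}(D,1)$ squared norm, and then use the Gamma--Dirichlet normalization to get $\mathbf p\sim\mathrm{Dir}_K(D)$ independent of the $|\psi_j\rangle$. Your explicit Jacobian check and your remark that $\|\mathrm{vec}(G)\|^2=\Tr GG^\dagger$ cancels in the ratio only spell out points the paper treats as standard.
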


\begin{proof}
\textit{Ginibre vectorization and bipartite construction.---}%
Consider a $D \times K$ complex Ginibre matrix $G$ where each entry $G_{ij} \sim \mathcal{CN}(0,1)$. 
Its column-major vectorization defines an unnormalized complex Gaussian vector in $\mathcal{H}_A \otimes \mathcal{H}_B$, which we denote by $|\tilde{\Psi}_{AB}\rangle = \mathrm{vec}(G)$.

By assigning the $j$th column of $G$ to the unnormalized ket $|\tilde{g}_j\rangle = \sum_{i=1}^D G_{ij}\ket{i_A}$, where $\{\ket{i_A}\}_{i=1}^D$ is a fixed orthonormal basis for $\mathcal{H}_A$, we can write the unnormalized bipartite vector as a sum over the environment basis $\{\ket{j_B}\}_{j=1}^K$:
\begin{equation}\label{eq:global_state_vec}
|\tilde{\Psi}_{AB}\rangle = \mathrm{vec}(G)= \sum_{j=1}^K |\tilde{g}_j\rangle \otimes |j_B\rangle.
\end{equation}
Upon normalization, $|\tilde{\Psi}_{AB}\rangle/\|\ket{\tilde{\Psi}_{AB}}\|$ is a Haar-random pure state on $\mathcal{H}_A \otimes \mathcal{H}_B$.

\textit{Partial trace and the Wishart correspondence.---}%
Since the environment basis $\{\ket{j_B}\}_{j=1}^K$ is orthonormal, the partial trace over subsystem $B$ yields a $D \times D$ operator on subsystem $A$. All cross-terms vanish under the partial trace, leaving only the sum of the outer products of the column vectors:
\begin{equation}\label{eq:wishart_equiv}
W = \Tr_B|\tilde{\Psi}_{AB}\rangle\langle\tilde{\Psi}_{AB}| = \sum_{j=1}^K |\tilde{g}_j\rangle \langle \tilde{g}_j| = G G^\dagger.
\end{equation}

\textit{Spherical symmetry and polar factorization.---}%
The entries of the Ginibre matrix $G_{ij}$ are independent standard complex Gaussians, $G_{ij} \sim \mathcal{CN}(0,1)$. Therefore, the joint probability density of any column vector $|\tilde{g}_j\rangle \in \mathbb{C}^D$ depends only on its inner product $\braket{\tilde{g}_j|\tilde{g}_j}$:
\begin{equation}\label{eq:joint_gaussian_pdf}
f(|\tilde{g}_j\rangle) = \frac{1}{\pi^D} e^{-\braket{\tilde{g}_j|\tilde{g}_j}},
\end{equation}
using the fact that the real and imaginary components are sampled independently and each have variance $\sigma^2=\frac{1}{2}$. 
This density is invariant under unitary rotation since $\braket{\tilde{g}_j|U^{\dagger}U|\tilde{g}_j}=\braket{\tilde{g}_j|\tilde{g}_j}$ in the exponent.
In the polar decomposition,
\begin{equation}\label{eq:polar_decomp}
|\tilde{g}_j\rangle = \sqrt{Y_j}|\psi_j\rangle \quad ; \qquad Y_j = \braket{\tilde{g}_j|\tilde{g}_j},
\end{equation}
the directional unit vector $|\psi_j\rangle$ is uniformly distributed on the complex unit hypersphere (defining a Haar-random pure state) \cite{collins2016random}. Furthermore, expressing the isotropic density in polar coordinates factors the distribution into radial and angular parts, ensuring that the directional state $|\psi_j\rangle$ and the squared magnitude $Y_j$ are statistically independent.

Since each complex coordinate $G_{ij}$ is composed of independent real and imaginary parts with variance $1/2$, mapping these coordinates to real variables reveals that the scaled magnitude $Q = 2\braket{\tilde{g}_j|\tilde{g}_j} = 2Y_j$ is the sum of $2D$ independent squared standard real Gaussians. 
Therefore, $Q$ follows a chi-squared distribution with $Q \sim \chi^2(2D)$. Dividing out the scaling factor of $2$, the squared scalar magnitude $Y_j = Q/2$ follows a gamma distribution with shape parameter $D$ and scale parameter $1$ \cite{rice2007mathematical}:
\begin{equation}\label{eq:gamma_magnitude}
Y_j \sim \text{Gamma}(D, 1).
\end{equation}

\textit{Trace normalization and the Dirichlet construction.---}%
Substituting the polar factorization $|\tilde{g}_j\rangle = \sqrt{Y_j}|\psi_j\rangle$ into the trace-normalized induced state separates the mathematical representation into classical scalar weights and quantum projection operators. Noting that the trace of each projector is unity ($\Tr\ket{\psi_j}\bra{\psi_j} = 1$), we expand:
\begin{equation}\label{eq:induced_polar}
\rho_{\text{induced}} = \frac{\sum_{j=1}^K |\tilde{g}_j\rangle \langle \tilde{g}_j|}{\sum_{k=1}^K \braket{\tilde{g}_k|\tilde{g}_k}} = \sum_{j=1}^K \left( \frac{Y_j}{\sum_{k=1}^K Y_k} \right) |\psi_j\rangle\langle\psi_j|.
\end{equation}

Let the classical mixing weights be $p_j = Y_j / \sum_{k=1}^K Y_k$. By standard properties of the gamma and Dirichlet distributions, the sum-normalization of $K$ independent $\mathrm{Gamma}(D,1)$ variables yields a symmetric Dirichlet distribution, so $\mathbf p \sim \mathrm{Dir}_K(D)$.
In addition, because the columns of $G$ are independent and, for each column, the isotropic complex Gaussian law factorizes in polar coordinates into independent radial and angular parts, the joint law of $(Y_1,\dots,Y_K)$ is independent of that of $(|\psi_1\rangle,\dots,|\psi_K\rangle)$. Since $\mathbf p$ is a function only of $(Y_1,\dots,Y_K)$, it follows that the Dirichlet weights are independent of the Haar-random pure states, and hence
\begin{equation}
\rho_{\text{induced}}
= \sum_{j=1}^K p_j |\psi_j\rangle\langle\psi_j|
\end{equation}
has exactly the MA form with concentration parameter $\alpha=D$, completing the proof.
\end{proof}

By establishing that setting the Dirichlet parameter to $\alpha = D$ yields an exact distributional identity, Theorem \ref{thm:equivalence} bridges the physical paradigm of induced measures \cite{zyczkowski2001induced} with classical statistical mixtures. Operationally, this equivalence enables ancilla-free state generation: an experimentalist can reproduce the exact statistics of an induced $D \times K$ mixed state on a single system by classically sampling $\mathbf{p} \sim \text{Dir}_K(D)$ and sequentially preparing independent single-system Haar-random states. 

\section{Analytical Applications of the Framework}
\label{sec:analytical}

The equivalence established in Theorem \ref{thm:equivalence} provides a powerful mathematical framework. By reformulating partial traces as discrete statistical mixtures on classical probability simplices, the MA representation recasts many polynomial expectation values of induced states into combinatorics over Dirichlet moments and low-order Haar averages, avoiding direct integration over composite unitary manifolds. In the following subsections, we deploy this constructive representation to derive exact low-order purity moments, evaluate cross-ensemble distance metrics, and compute exact average determinants.

To streamline the combinatorial evaluations throughout this section and the appendices, let $\Pi_i \equiv |\psi_i\rangle\langle\psi_i|$ denote the rank-1 quantum projector associated with the $i$-th Haar-random pure state. Because distinct states $|\psi_{i_1}\rangle, \dots, |\psi_{i_m}\rangle$ are mutually independent, the expectation value of their cyclic trace factors directly. Using the linearity of the trace and expectation, combined with the Haar 1-design property $\mathbb{E}[\Pi_i] = \frac{\mathbb{I}}{D}$~\cite{ambainis2007quantum, collins2010random}, the trace of any product of $m$ distinct projectors evaluates to
\begin{align}\label{eq:distinct_projector_identity}
    \mathbb{E}\left[\Tr\Pi_{i_1}\Pi_{i_2}\cdots\Pi_{i_m}\right]
    &= \Tr\mathbb{E}[\Pi_{i_1}]\mathbb{E}[\Pi_{i_2}]\cdots\mathbb{E}[\Pi_{i_m}] \nonumber\\
    &= \Tr\left[\left(\frac{\mathbb{I}}{D}\right)^m\right] = \frac{\Tr\mathbb{I}}{D^m} = \frac{1}{D^{m-1}},
\end{align}
for any collection of distinct indices $\{i_1, \dots, i_m\}$ with $m \ge 1$. Together with the idempotency relation $\Pi_i^2 = \Pi_i$, this elementary identity allows arbitrary cyclic expectations to be computed directly by counting distinct state occurrences.

\subsection{Exact low-order purity moments}

To calculate the mean purity, we first algebraically expand the trace of the squared density matrix for the MA ensemble: 
\begin{align}
    \Tr\rho_{\text{MA}}^2 
    &= \Tr\left[ \left(\sum_{i=1}^K p_i \Pi_i\right)^2 \right] = \sum_{i=1}^K\sum_{j=1}^K p_i p_j \Tr\Pi_i \Pi_j.
\end{align}
Because the mixing weights $\mathbf{p}$ and pure-state projectors $\Pi_i$ are statistically independent, the expectation value cleanly separates. For diagonal terms ($i=j$), idempotency gives $\Tr\Pi_i^2 = \Tr\Pi_i = 1$. For off-diagonal terms ($i \neq j$), the quantum trace expectation evaluates directly from Eq.~\eqref{eq:distinct_projector_identity} with $m=2$ as $\mathbb{E}[\Tr\Pi_i\Pi_j] = \frac{1}{D}$.

To evaluate the mixing weights, we use the standard joint moments of the symmetric Dirichlet distribution $\mathbf{p} \sim \text{Dir}_K(D)$:
\begin{align}\label{eq:dirichlet_moments}
    \mathbb{E}\left[\prod_{i=1}^K p_i^{q_i}\right] &= \frac{\Gamma(KD)}{\Gamma\left(KD + \sum_{i=1}^K q_i\right)} \prod_{i=1}^K \frac{\Gamma(D + q_i)}{\Gamma(D)}.
\end{align}
Evaluating this yields $\mathbb{E}[p_i^2] = \frac{D(D+1)}{KD(KD+1)}$ and $\mathbb{E}[p_i p_j] = \frac{D^2}{KD(KD+1)}$ for $i \neq j$. 

Substituting these directly into the separated sum explicitly recovers Lubkin's formula \cite{lubkin1978entropy}:
\begin{align}\label{eq:avg_purity}
    \mathbb{E}\left[\Tr\rho_{\text{MA}}^2\right] 
    &= K \frac{D(D+1)}{KD(KD+1)} \nonumber \\
    &\quad + K(K-1) \frac{D^2}{KD(KD+1)} \frac{1}{D} \nonumber \\
    &= \frac{KD(D+1) + KD(K-1)}{KD(KD+1)} \nonumber \\
    &= \frac{D+K}{KD+1}.
\end{align}

We similarly obtain the exact third moment by expanding $\Tr\rho_{\text{MA}}^3$ into three distinct index partitions:
\begin{enumerate}
    \item \textit{All indices equal ($i=j=k$, $K$ terms):} Idempotency yields $\mathbb{E}[\Tr\Pi_i^3] = 1$. Contributes $K \frac{D(D+1)(D+2)}{KD(KD+1)(KD+2)}$.
    \item \textit{Two indices equal ($i=j \neq k$ and permutations, $3K(K-1)$ terms):} Reducing $\Pi_i^2 = \Pi_i$ and applying Eq.~\eqref{eq:distinct_projector_identity} with $m=2$ gives $\mathbb{E}[\Tr\Pi_i\Pi_k] = \frac{1}{D}$. Contributes $3K(K-1) \frac{D(D+1)}{KD(KD+1)(KD+2)}$.
    \item \textit{All indices distinct ($i, j, k$ pairwise distinct, $K(K-1)(K-2)$ terms):} Applying Eq.~\eqref{eq:distinct_projector_identity} with $m=3$ directly gives $\mathbb{E}[\Tr\Pi_i\Pi_j\Pi_k] = \frac{1}{D^2}$. Contributes $K(K-1)(K-2) \frac{D}{KD(KD+1)(KD+2)}$.
\end{enumerate}
Upon expanding and collecting terms, this simplifies cleanly to Sommers and \.{Z}yczkowski's exact 2004 result \cite{sommers2004statistical}:
\begin{equation}
    \mathbb{E}\left[\Tr\rho_{\text{MA}}^3\right] = \frac{D^2 + K^2 + 3KD + 1}{(KD+1)(KD+2)}.
\end{equation}

The systematic nature of this combinatorial framework allows it to scale methodically to higher-order purity moments. Applying the same partitioning logic for the fourth purity moment (see Appendix~\ref{app:exact_4th_moment} for details), we find
\begin{equation}
\label{eq:purityFourth}
\mathbb{E}\left[\Tr\rho_{\text{MA}}^4\right] = \frac{D^3 + K^3 + 6DK^2 + 6D^2K + 5D + 5K}{(KD+1)(KD+2)(KD+3)},
\end{equation}
which exactly recovers the continuous-integration RMT formula of Sommers and Życzkowski~\cite{sommers2004statistical}. 
We note that this equivalence is consistent with the serendipitous numerical agreement found previously between the purity statistics of MA ensembles and Hilbert-Schmidt-induced states reported in Ref.~\cite{lohani2021improving}, the original motivation for the current work.

\subsection{Mean Hilbert-Schmidt distance between distinct induced ensembles}
\label{sec:meanHS}
The evaluation of distance metrics between quantum states is a key tool in quantum characterization, verification, and tomography. Finding the expected Hilbert-Schmidt distance
\begin{equation}
D_{\text{HS}}^2(\rho_1, \rho_2) = \Tr\!\left[(\rho_1 - \rho_2)^2\right]
\end{equation}
between independent random states drawn from two induced-measure ensembles with environment dimensions $K_1$ and $K_2$ (i.e., $\rho_1 \equiv \rho_{\text{induced}}^{(K_1)}$ and $\rho_2 \equiv \rho_{\text{induced}}^{(K_2)}$) is particularly transparent in the MA representation.

Since $\rho_1$ and $\rho_2$ are generated independently, their joint probability distribution factors as a product measure, $\Pr(\rho_1, \rho_2) = \Pr(\rho_1)\Pr(\rho_2)$. Expanding the squared trace and applying the linearity of trace and expectation yields:
\begin{equation}\label{eq:hs_expansion}
\mathbb{E}\left[D_{\text{HS}}^2(\rho_1, \rho_2)\right] = \mathbb{E}\left[\Tr\rho_1^2\right] + \mathbb{E}\left[\Tr\rho_2^2\right] - 2\mathbb{E}\left[\Tr\rho_1\rho_2\right].
\end{equation}
To evaluate the expected cross-term, we can view this quantity from two mutually illuminating perspectives. Macroscopically, because trace and expectation are linear and the ensembles are statistically independent, the last expectation factors as $\Tr\mathbb{E}[\rho_1]\mathbb{E}[\rho_2]$. For the induced ensembles considered here, unitary invariance implies $\mathbb{E}[\rho_1]=\mathbb{E}[\rho_2]=\frac{\mathbb{I}}{D}$, so this evaluates immediately to
\begin{align}\label{eq:cross_term_macro}
\mathbb{E}\left[\Tr\rho_1\rho_2\right] &= \Tr\mathbb{E}[\rho_1]\mathbb{E}[\rho_2] =  \frac{1}{D}.
\end{align}
Thus the mean cross-term is the dimension-dependent constant $\frac{1}{D}$, independent of the environment dimensions $K_1$ and $K_2$. More generally, the same identity holds for any two independent ensembles satisfying $\mathbb{E}[\rho_1]=\mathbb{E}[\rho_2]=\frac{\mathbb{I}}{D}$; unitary invariance is sufficient, but not necessary, for this mean-state property.

Within the MA representation, the same invariant can also be derived microscopically by resolving each state into its random simplex weights and Haar-random rank-1 projectors. Writing the two states as $\rho_1 = \sum_{i=1}^{K_1} p_{1,i}\Pi_{1,i}$ and $\rho_2 = \sum_{j=1}^{K_2} p_{2,j}\Pi_{2,j}$, where the two ensembles are sampled independently and the simplex weights are independent of the Haar-random projectors within each ensemble, every cross-pair $(\Pi_{1,i},\Pi_{2,j})$ is independent.
Applying Eq.~\eqref{eq:distinct_projector_identity} uniformly across all index pairs and factoring the independent Dirichlet weights and quantum projectors gives:
\begin{align}\label{eq:cross_term_micro}
    \mathbb{E}\left[\Tr\rho_1\rho_2\right] &= \sum_{i=1}^{K_1} \sum_{j=1}^{K_2} \mathbb{E}[p_{1,i}] \mathbb{E}[p_{2,j}] \mathbb{E}\left[\Tr\Pi_{1,i}\Pi_{2,j}\right] \nonumber \\
    &= \frac{1}{D} \left(\sum_{i=1}^{K_1} \mathbb{E}[p_{1,i}]\right) \left(\sum_{j=1}^{K_2} \mathbb{E}[p_{2,j}]\right) = \frac{1}{D}.
\end{align}
Equation~\eqref{eq:cross_term_micro} demonstrates that the MA ensemble fundamentally respects the global centroid geometry. Furthermore, it reveals that the $\frac{1}{D}$ cross-term is a structural consequence of probability conservation on the simplex ($\sum_i p_i = 1$), ensuring that $\mathbb{E}[\Tr\rho_1\rho_2] = \frac{1}{D}$ holds universally for any discrete mixture of Haar-random pure states, including the generalized asymmetric Dirichlet priors examined in Appendix~\ref{app:generalized_dirichlet}.

Substituting the cross-term $\mathbb{E}[\Tr\rho_1\rho_2]=\frac{1}{D}$ together with the purity formula in Eq.~\eqref{eq:avg_purity} into Eq.~\eqref{eq:hs_expansion} yields the exact mean-square Hilbert-Schmidt distance between the two induced ensembles:
\begin{equation}\label{eq:avg_hs_cross_strata}
    \mathbb{E}\left[D_{\text{HS}}^2\left(\rho_1^{(K_1)}, \rho_2^{(K_2)}\right)\right] = \frac{D+K_1}{K_1 D+1} + \frac{D+K_2}{K_2 D+1} - \frac{2}{D}.
\end{equation}
For identical environmental dimensions ($K_1 = K_2 = K$), Eq.~\eqref{eq:avg_hs_cross_strata} simplifies immediately to:
\begin{equation}\label{eq:avg_hs_identical}
\mathbb{E}\left[ D_{\text{HS}}^2\left(\rho_1^{(K)}, \rho_2^{(K)}\right) \right] = \frac{2(D^2-1)}{D(KD+1)},
\end{equation}
recovering the foundational result of Życzkowski and Sommers for induced measures~\cite{zyczkowski2001induced}. 

The general $(K_1,K_2)$ formula in Eq.~\eqref{eq:avg_hs_cross_strata} also appears in Kumar's analysis of Wishart-induced distance statistics~\cite{kumar2020wishart}, where it is obtained through spectral methods in the course of a broader study of distance moments and variances. 
Here, the MA framework provides a unified single-system representation in which both intrastate purities and interstate overlaps reduce to elementary Haar averages and simplex moments. In this derivation, the known results are recovered without direct recourse to continuous group integrals or joint eigenvalue distributions.

\subsection{Average matrix determinants and Gram volumes}
\label{sec:determinant}

For a $D$-dimensional density matrix $\rho$, the determinant is nonzero if and only if $\rho$ is of full rank. Geometrically, for a factorization $A A^\dagger$, $\det AA^\dagger$ represents the squared volume of the $D$-dimensional parallelepiped spanned by the column vectors of $A$.
Within the MA framework, where $\rho_\text{MA} = \sum_{j=1}^K p_j |\psi_j\rangle\langle\psi_j|$, the state is constructed as a convex combination of $K$ rank-1 projectors. Consequently, the rank of $\rho$ is strictly bounded by $K$. If $K < D$, the mixed state is singular and its determinant vanishes identically: $\det\rho_{\text{MA}} = 0$. 

In the regime where $K \ge D$, the state is of full rank with probability 1. Within the quantum physics literature, average determinants of trace-normalized induced measures are traditionally evaluated by integrating over the joint eigenvalue density of the Jacobi unitary ensemble~\cite{sommers2004statistical, zyczkowski2001induced}. Here we show that, within the MA representation, the same quantity admits a direct expansion based on Cauchy-Binet, together with the independence between simplex weights and Haar-random directions.

By representing the density matrix as $\rho = \tilde{\Psi}\tilde{\Psi}^\dagger$, where $\tilde{\Psi} = [\sqrt{p_1}\ket{\psi_1} \cdots \sqrt{p_K}\ket{\psi_K}]$ is the $D \times K$ matrix of weighted pure-state column vectors, we may apply the Cauchy-Binet theorem \cite{horn2012matrix}. For $K \ge D$, this expresses $\det\rho$ as the sum of squared determinants over all $\binom{K}{D}$ possible $D \times D$ column submatrices $\tilde{\Psi}_{\mathbf{j}}$:
\begin{align}\label{eq:det_cauchy_binet}
    \det\rho &= \det\tilde{\Psi}\tilde{\Psi}^\dagger = \sum_{1 \le j_1 < \dots < j_D \le K} \left| \det\tilde{\Psi}_{\mathbf{j}} \right|^2 \nonumber \\
    &= \sum_{1 \le j_1 < \dots < j_D \le K} \left( \prod_{m=1}^D p_{j_m} \right) \left| \det\Psi_{\mathbf{j}} \right|^2,
\end{align}
where $\tilde{\Psi}_{\mathbf{j}} = [\sqrt{p_{j_1}}\ket{\psi_{j_1}} \cdots \sqrt{p_{j_D}}\ket{\psi_{j_D}}]$ is the submatrix formed from the columns indexed by the strictly increasing $D$-tuple $\mathbf{j} = (j_1, \dots, j_D)$, and $\Psi_{\mathbf{j}} = [\ket{\psi_{j_1}} \cdots \ket{\psi_{j_D}}]$ is the corresponding unweighted pure-state matrix. If $K < D$, the sum is empty and $\det\rho = 0$, reflecting rank deficiency.

The classical weights $\mathbf{p}$ are statistically independent of the Haar-random directions $|\psi_j\rangle$; therefore, the expectation value of \cref{eq:det_cauchy_binet} cleanly factors. Under a general Dirichlet distribution, the joint expectation of $D$ distinct coordinates depends on the individual concentration parameters $\alpha_j$. For the symmetric Dirichlet distribution $\mathbf p\sim\mathrm{Dir}_K(\alpha)$, the total concentration is $K\alpha$, so
\begin{align}
\mathbb{E}\left[\prod_{m=1}^D p_{j_m}\right]
&= \frac{\Gamma(K\alpha)}{\Gamma(K\alpha+D)}
\prod_{m=1}^D \frac{\Gamma(\alpha+1)}{\Gamma(\alpha)} \nonumber \\
&= \frac{\Gamma(K\alpha)}{\Gamma(K\alpha+D)}\alpha^D.
\end{align}

The factor
\begin{equation}
|\det\Psi_{\mathbf j}|^2=\det\Psi_{\mathbf j}^\dagger\Psi_{\mathbf j}
\end{equation}
is the Gram determinant of the $D$ independent Haar-random columns of $\Psi_{\mathbf j}$, and hence equals the squared volume of the parallelepiped they span. Let $S_{m-1}=\mathrm{span}\{|\psi_{j_1}\rangle,\dots,|\psi_{j_{m-1}}\rangle\}$, and let $h_m^2$ denote the squared norm of the component of $|\psi_{j_m}\rangle$ orthogonal to $S_{m-1}$, with $h_1^2=1$. Then
\begin{equation}
|\det\Psi_{\mathbf j}|^2=\prod_{m=1}^D h_m^2.
\end{equation}
Because a Haar-random pure state is isotropic, $\mathbb E[|\psi\rangle\langle\psi|]=\frac{\mathbb I}{D}$ and, conditional on $S_{m-1}$, its expected squared projection onto this $(m-1)$-dimensional subspace is $(m-1)/D$. Therefore,
\begin{equation}
\mathbb E[h_m^2\mid S_{m-1}]=1-\frac{m-1}{D}=\frac{D-m+1}{D}.
\end{equation}
Applying iterated expectation gives
\begin{equation}
\mathbb E\!\left[|\det\Psi_{\mathbf j}|^2\right]
=\prod_{m=1}^D \frac{D-m+1}{D}
=\frac{D!}{D^D}.
\end{equation}

We can now evaluate the Cauchy-Binet expansion by summing over all $\binom{K}{D}$ possible column selections. The expected determinant for a MA ensemble with arbitrary concentration parameter $\alpha$ evaluates to:
\begin{equation}
\mathbb{E}\left[\det\rho_{\text{MA}}\right] = \binom{K}{D} \frac{\Gamma(K\alpha) \alpha^D}{\Gamma(K\alpha + D)} \frac{D!}{D^D}.
\end{equation}
This equation provides the exact mean determinant for any phenomenological MA prior. To bridge this back to physical open quantum systems, we recall Theorem \ref{thm:equivalence}, which establishes that the MA ensemble precisely mirrors standard induced RMT measures when the concentration parameter is tuned to the system dimension ($\alpha = D$). Substituting this physical constraint yields:
\begin{align}\label{eq:det_symmetric_recover}
\mathbb{E}\left[\det\rho_{\text{induced}}\right] &= \frac{K!}{D!(K-D)!} \frac{\Gamma(KD) D^D}{\Gamma(KD + D)} \frac{D!}{D^D}  \nonumber \\
&= \frac{K!}{(K-D)!} \frac{\Gamma(KD)}{\Gamma(KD + D)} \nonumber \\
&= \frac{\prod_{j=1}^{D}(K-j+1)}{\prod_{j=1}^{D}(KD+j-1)}.
\end{align}

To confirm the physical and mathematical consistency of Eq.~\eqref{eq:det_symmetric_recover}, we observe that it seamlessly reproduces two special cases. First, when $K < D$, the numerator product $\prod_{j=1}^{D}(K-j+1)$ contains the vanishing factor $(K-K)=0$ at $j=K+1$, ensuring that $\mathbb{E}[\det\rho_{\text{induced}}] = 0$ as required for rank-deficient, singular density matrices. Second, evaluating Eq.~\eqref{eq:det_symmetric_recover} for a Hilbert-Schmidt ensemble of qubits ($D=2, K=2$) yields $\mathbb{E}[\det\rho] = \frac{2 \times 1}{4 \times 5} = \frac{1}{10}$. This perfectly matches the exact geometric integration of the determinant, $\det\rho = \frac{1}{4} - r^2$, over the normalized 3-ball Bloch vector radial distribution $f_{\text{HS}}(r) = 24r^2$ derived by \.{Z}yczkowski and Sommers~\cite{zyczkowski2001induced} (adopting their convention where the Bloch radius satisfies $r \in [0, 1/2]$), which yields $\int_0^{\frac{1}{2}} (\frac{1}{4} - r^2) 24r^2 \, dr = \frac{1}{10}$.

\section{Conclusion and Outlook}
\label{sec:conclusion}

In this work, we established the mathematical equivalence between the family of induced measures for random mixed quantum states and the MA distribution. By formalizing this relationship, we mapped the physical process of environmental tracing to a multivariate statistical mixture of independent, non-orthogonal Haar-random pure states weighted by a Dirichlet simplex. 
This equivalence provides a powerful, unifying computational alternative to traditional calculation approaches. By translating high-dimensional continuous integrals into discrete combinatorics, we seamlessly recovered exact low-order purity moments, exact cross-ensemble Hilbert-Schmidt distance metrics, and exact average matrix determinants.

Looking forward, this combinatorial and convex geometric approach opens new computational pathways for analyzing open quantum systems. Because the MA framework cleanly separates the quantum geometry of pure states (the Haar measure) from classical statistical mixing (the Dirichlet measure), computationally rigid problems can be reformulated into low-dimensional convex geometry. In addition to its potential computational advantages for problems involving the Hilbert-Schmidt ensemble, our result may have implications for quantum foundations.

Recent work has shown that proper and improper realizations of the same reduced density operator---although predictively indistinguishable---can lead to different retrodictive updates when used as priors for so-called ``prior-extended'' Petz maps \cite{LiuPRL2026Proper}. The equivalence established here therefore provides a natural setting in which to compare the retrodictive behaviour of proper-mixture beliefs generated through the MA construction with that of improper-mixture beliefs arising from the induced construction. An interesting direction for future work would be to determine whether these constructions induce distinct probability distributions over retrodiction maps, characterize their dependence on $D$ and $K$, and quantify any resulting differences in quantum-state recovery or other noisy inverse problems.

\begin{acknowledgments}
J.M.L. acknowledges funding from the National Science Foundation (ECCS-2540189) and the U.S. Department of Energy (DE-SC0026396). S.L. acknowledges support from the Southern Methodist University (SMU) startup fund and the Sam Taylor Fellowship Program.
\end{acknowledgments}

\appendix

\section{Combinatorial derivation of fourth moment}
\label{app:exact_4th_moment}

Let $\Pi_i = |\psi_i\rangle\langle\psi_i|$ denote the rank-1 quantum projection operator for the $i$-th pure state in the MA ensemble. Expanding $\Tr\rho_{\text{MA}}^4$ over four sum indices yields:
\begin{equation}
    \mathbb{E}\left[\Tr\rho_{\text{MA}}^4\right] = \sum_{i,j,k,l=1}^K \mathbb{E}\left[ p_i p_j p_k p_l \right] \mathbb{E}\left[ \Tr\Pi_i \Pi_j \Pi_k \Pi_l\right].
\end{equation}
Since every term in the fourth-moment expansion is quartic in the Dirichlet weights, \(\sum_{r=1}^K q_r = 4\) in Eq.~\eqref{eq:dirichlet_moments}, and hence all cases share the common denominator $\left[\prod_{m=0}^3 (KD+m)\right]^{-1}$. The four-index expansion decomposes into five integer partitions mapping to six distinct trace topologies as summarized in Table~\ref{tab:4th_moment_cases}. 
Note that Case 4 (crossing topology $\Tr\Pi_i \Pi_j \Pi_i \Pi_j = |\langle\psi_i|\psi_j\rangle|^4$) requires the second raw moment of the $\text{Beta}(1, D-1)$ overlap distribution, $\mathbb{E}[|\langle\psi_i|\psi_j\rangle|^4] = \frac{2}{D(D+1)}$.

\begin{table*}[t]
\caption{Combinatorial partitions, Dirichlet weight numerators, and expected quantum trace topologies for the fourth purity moment $\mathbb{E}\left[\Tr\rho_{\text{MA}}^4\right]$.}
\label{tab:4th_moment_cases}
\begin{ruledtabular}
\renewcommand{\arraystretch}{1.2}
\begin{tabular}{c l c c c}
\textbf{Case} & \textbf{Index partition} & \textbf{Permutations} & \textbf{Dirichlet weight factor} & \textbf{Trace expectation} \\ \hline
1 & $i=j=k=l$ ($q_i=4$) & $K$ & $D(D+1)(D+2)(D+3)$ & $1$ \\
2 & $q_i=3, q_j=1$ & $4K(K-1)$ & $D^2(D+1)(D+2)$ & $\frac{1}{D}$ \\
3 & $q_i=2, q_j=2$ (adjacent) & $2K(K-1)$ & $D^2(D+1)^2$ & $\frac{1}{D}$ \\
4 & $q_i=2, q_j=2$ (crossing) & $K(K-1)$ & $D^2(D+1)^2$ & $\frac{2}{D(D+1)}$ \\
5 & $q_i=2, q_j=1, q_k=1$ & $6K(K-1)(K-2)$ & $D^3(D+1)$ & $\frac{1}{D^2}$ \\
6 & $q_i=1, q_j=1, q_k=1, q_l=1$ & $K(K-1)(K-2)(K-3)$ & $D^4$ & $\frac{1}{D^3}$ \\
\end{tabular}
\end{ruledtabular}
\end{table*}

Summing the product of permutations, weight numerators, and quantum traces from Table~\ref{tab:4th_moment_cases} across the shared Dirichlet denominator yields:
\begin{widetext}
\begin{align}
    \mathbb{E}\left[\Tr\rho_{\text{MA}}^4\right] 
    &= \frac{1}{KD(KD+1)(KD+2)(KD+3)} \Bigg[ \overbrace{K \cdot D(D+1)(D+2)(D+3) \cdot 1}^{\text{Case 1}} \nonumber \\
    &\quad + \overbrace{4K(K-1) \cdot D^2(D+1)(D+2) \cdot \frac{1}{D}}^{\text{Case 2}} + \overbrace{2K(K-1) \cdot D^2(D+1)^2 \cdot \frac{1}{D}}^{\text{Case 3}} \nonumber \\
    &\quad + \overbrace{K(K-1) \cdot D^2(D+1)^2 \cdot \frac{2}{D(D+1)}}^{\text{Case 4}} + \overbrace{6K(K-1)(K-2) \cdot D^3(D+1) \cdot \frac{1}{D^2}}^{\text{Case 5}} \nonumber \\
    &\quad + \overbrace{K(K-1)(K-2)(K-3) \cdot D^4 \cdot \frac{1}{D^3}}^{\text{Case 6}} \Bigg].
\end{align}

Factoring out $KD$ from the numerator cancels the leading $KD$ term in the denominator:
\begin{align}
    \frac{\text{Numerator}}{KD} 
    &= (D+1)(D+2)(D+3) + 4(K-1)(D+1)(D+2) + 2(K-1)(D+1)^2 \nonumber \\
    &\quad + 2(K-1)(D+1) + 6(K-1)(K-2)(D+1) + (K-1)(K-2)(K-3) \nonumber \\
    &= D^3 + K^3 + 6DK^2 + 6D^2K + 5D + 5K,
\end{align}
which simplifies cleanly to the exact closed-form result in \cref{eq:purityFourth} of the main text.
\end{widetext}

\section{Mathematical identities of generalized Dirichlet-weighted ensembles}
\label{app:generalized_dirichlet}
For purely mathematical interest, one can consider generalized MA ensembles where weights are drawn from an arbitrary asymmetric Dirichlet distribution $\mathbf{p} \sim \text{Dir}(\alpha_1, \dots, \alpha_K)$ with total mass $\alpha_0 = \sum_{k=1}^K \alpha_k$. We emphasize that this construction serves as an \emph{ad hoc} prior rather than a physical partial trace of an open quantum system.

Evaluating the moments over the asymmetric simplex yields the exact expectation for the purity of such ensembles:
\begin{equation}
\mathbb{E}\left[\Tr\rho_{\text{asym}}^2\right] = \frac{(D-1)\sum_{i=1}^K \alpha_i^2 + D\alpha_0 + \alpha_0^2}{D \alpha_0 (\alpha_0 + 1)}.
\end{equation}
Applying the Cauchy-Binet theorem over asymmetric Dirichlet expectations yields the exact average matrix determinant:
\begin{equation}
\begin{split}
    &\mathbb{E}\left[\det\rho_{\text{asym}}\right] =\frac{\Gamma(\alpha_0)}{\Gamma(\alpha_0 + D)} \frac{(D-1)!}{D^{D-1}}  \sum_{\substack{1 \le j_1 < \cdots \\ \cdots < j_D \le K}} \prod_{m=1}^D \alpha_{j_m}.
\end{split}
\end{equation}
When all shape parameters are set equal ($\alpha_i = D$), these equations reduce to the physical symmetric RMT expressions derived in the main text.

\bibliography{references}

\end{document}